\documentclass[journal,draftclsnofoot,onecolumn,12pt,twoside]{IEEEtran}
\usepackage{amsmath,amsfonts}
\usepackage{algorithmic}
\usepackage{algorithm}
\usepackage{amssymb}

\usepackage{color}

\usepackage{float} 

\usepackage{array}
\usepackage[caption=false,font=normalsize,labelfont=sf,textfont=sf]{subfig}
\usepackage{textcomp}
\usepackage{stfloats}
\usepackage{url}
\usepackage{bm}
\usepackage{setspace}
\usepackage{verbatim}
\usepackage{multirow} 
\usepackage{ntheorem}
\usepackage{graphicx}
\usepackage{cite}
\usepackage{amsmath}

\allowdisplaybreaks[4]

\hyphenation{op-tical net-works semi-conduc-tor IEEE-Xplore}
\theorembodyfont{\upshape}
\theoremheaderfont{\it}
\qedsymbol{\square}
\theoremseparator{:}
\newtheorem{theorem}{\indent Theorem}
\newtheorem{lemma}{\indent Lemma}
\newtheorem*{proof}{\indent Proof}
\newtheorem{remark}{\indent Remark}

\makeatletter

\newcommand{\Rmnum}[1]{\expandafter\@slowromancap\romannumeral #1@}
\makeatother
\begin{document}

\title{\huge GoMORE: Global Model Reuse for Resource-Constrained Wireless Federated Learning}

\author{Jiacheng~Yao,
		Zhaohui~Yang,
		Wei~Xu,~\IEEEmembership{Senior~Member,~IEEE,}
		Mingzhe~Chen,
		and Dusit~Niyato,~\IEEEmembership{Fellow,~IEEE}

\thanks{Jiacheng Yao and Wei Xu are with the National Mobile Communications Research Laboratory (NCRL), Southeast University, Nanjing 210096, China (\{jcyao, wxu\}@seu.edu.cn).}
\thanks{Zhaohui Yang is with the Zhejiang Lab, Hangzhou 311121, China, and also with the College of Information Science and Electronic Engineering, Zhejiang University, Hangzhou, Zhejiang 310027, China (yang\_zhaohui@zju.edu.cn).}
\thanks{Mingzhe Chen is with the Department of Electrical and Computer Engineering and Institute for Data Science and Computing, University of Miami, Coral Gables, FL 33146 USA (mingzhe.chen@miami.edu).}
\thanks{Dusit Niyato is with the School of Computer Science and Engineering, Nanyang Technological University, Singapore 308232 (dniyato@ntu.edu.sg).}
}

%

\maketitle

\begin{abstract}
Due to the dynamics of wireless environment and limited bandwidth, wireless federated learning (FL) is challenged by frequent transmission errors and incomplete aggregation from devices. In order to overcome these challenges, we propose a \underline{G}l\underline{o}bal \underline{MO}del \underline{RE}use strategy (GoMORE) that reuses the outdated global model to replace the local model parameters once a transmission error occurs. We analytically prove that the proposed GoMORE is strictly superior over the existing strategy, especially at low signal-to-noise ratios (SNRs). In addition, based on the derived expression of weight divergence, we further optimize the number of participating devices in the model aggregation to maximize the FL performance with limited communication resources. Numerical results verify that the proposed GoMORE successfully approaches the performance upper bound by an ideal transmission. It also mitigates the negative impact of non-independent and non-identically distributed (non-IID) data while achieving over 5 dB reduction in energy consumption.
\end{abstract}

\begin{IEEEkeywords}
Federated learning (FL), unreliable communication, limited bandwidth, partial aggregation
\end{IEEEkeywords}

\section{Introduction}
\IEEEPARstart{D}{ue} to the rapid growth of data traffic in beyond fifth-generation (B5G) networks, federated learning (FL) is regarded as an important enabling key technique for wireless networks to support various data-driven intelligent applications \cite{xu,push}. Specifically, the distributed devices, coordinated by a parameter server (PS), collaboratively train a shared learning model by exploiting local private data with marginal privacy leaks~\cite{fl}. Recently, state-of-art works have considered the specific deployments of FL algorithm over wireless networks \cite{energy,ajoint}, where the devices and the PS communicate through wireless links. However, the communication resource is always physically limited and the dynamics of wireless environment presents, both of which have become the most significant bottlenecks limiting the learning performance of wireless FL. 

Dynamic wireless channel conditions usually lead to unignorable packet loss due to transmission errors, which hinders the convergence of FL algorithms and deteriorates the learning performance. In \cite{ajoint}, the impact of unreliable communication on the convergence performance of an FL algorithm was analyzed and communication resource allocation was optimized based on the derived convergence bounds.
In \cite{wirelessfl}, the convergence performance in resource-constrained cellular wireless networks was analyzed.
In addition, the effects of model pruning and packet error on the convergence were quantified in \cite{ni}, which was further used to seek a balance between the communication and learning.
To combat with unreliable communication, the user datagram protocol (UDP) was adopted in \cite{ye} with a robust training algorithm, which retains the same asymptotic convergence rate as that with error-free communications. Moreover, in \cite{update}, outdated local updates of each device was additionally exploited to replace local updates with transmission errors. 

On the other hand, limited spectrum resource places substantial restrictions on the number of devices that can be simultaneously involved in FL model training of each communication round, which also harms the learning process. Existing works tried to devise device scheduling strategies at the PS to purse better learning performance, such as the importance-based sampling \cite{importance} and channel conditions-based scheduling \cite{channel}. However, it is worth pointing out that there is a trade-off between the two challenges mentioned above. For a given bandwidth, more participating devices results in less bandwidth occupied by each device, which in turn brings more transmission errors, and vice versa. To the best of our knowledge, few efforts have been endeavored to achieve a balance between higher probability  of successful transmission and more participating devices under the practical constraints of limited communication resource.

In this paper, we characterize the coupled impacts of unreliable communication and limited bandwidth for wireless FL by deriving theoretical bounds of the weight divergence. A \underline{G}l\underline{o}bal \underline{MO}del \underline{RE}use strategy, namely GoMORE, is proposed to combat the transmission errors. We prove in theory that the GoMORE always outperforms the existing direct discarding strategy. Moreover, based on the derived analytical results, we further optimize the number of participating devices, which balances between transmission errors and the number of participating devices. Numerical experiments are conducted to verify the superiority of GoMORE especially in the case with low signal-to-noise ratios (SNRs) and non-independent and non-identically distributed (non-IID) local data sets. 

The rest of this paper is organized as follows. Section II formulates the system model and elaborates on the proposed strategy. In Section III, we derive the analytical performance of the proposed strategy and optimize the number of participating devices. Simulation results and concluding remarks are given in Sections IV and V, respectively.
\vspace{-0.1cm}
\section{System Model}
We consider a typical wireless FL system, where $K$ distributed devices communicate with a PS via wireless channels. By exploiting the non-IID local data sets $\left\{\mathcal{D}_k\right\}_{k=1}^K$ owned by the devices and through wireless communications between the devices and PS, we aim to cooperatively train the global model parameters via FL. The FL is modelled as optimizing the global parameters, $\bm{w}$, to minimize the loss function
\begin{align}\label{e1}
F(\bm{w})=\sum_{k=1}^K \frac{D_k}{\sum_{i=1}^K D_i} F_k(\bm{w},\mathcal{D}_k),
\end{align}
\noindent where $D_k$ is the size of the $k$-th local data set $\mathcal{D}_k$ and $F_k(\cdot)$ is the local loss function at device $k$. Without loss of generality, we assume that the size of all local data sets are the same, i.e., $D_1=\cdots=D_K=D$. A typical local loss function at device $k$ is defined as
\begin{align}
 F_k(\bm{w},\mathcal{D}_k)=\frac{1}{D}\sum_{\bm{x}\in \mathcal{D}_k} \mathcal{L}(\bm{w},\bm{x}),
\end{align}
where $\bm{x}$ is data selected from $\mathcal{D}_k$ and $\mathcal{L}(\bm{w},\bm{x})$ is the loss function for data sample $\bm{x}$. Note that, the data of different devices are usually non-IID distributed in practice, which is different from the centralized learning paradigms and brings additional challenges.
\vspace{-0.15cm}
\subsection{Federated Learning Model}
Next, we introduce the typical FL algorithm under ideal transmission. Specifically, the $m$-th round of the FL algorithm is composed of the following three steps.
\subsubsection{Broadcasting}  The PS broadcasts the latest global parameters $\bm{w}_{m}$ to all devices.
\subsubsection{Local computation} Each device receives the global parameter $\bm{w}_{m}$ and initializes its local model as $\bm{w}_{m}$. Then, each device $k$ runs $T$ local training epochs before obtaining an updated model. At the $t$-th epoch for $1\leq t\leq T$, it follows:
\begin{align} \label{e3}
\bm{w}_{mT+t}^k = \bm{w}_{mT+t-1}^k -\eta_m \nabla F_k\left(\bm{w}_{mT+t-1}^k,\bm{\xi}_{mT+t-1}^k \right),
\end{align}
where the subscript $mT+t$ represents the cumulative number of local epochs, the superscript $k$ corresponds to the $k$-th device, $\eta_m$ is the learning rate chosen at the $m$-th round, and $\bm{\xi}_{mT+t-1}^k$ is the batch of samples with size $b$. According to the definition, we have $\bm{w}_{mT}^k=\bm{w}_m$.
\subsubsection{Uplink transmission and aggregation} 
All the devices transmit their updated local models, $\bm{w}_{(m+1)T}^k$, to the PS.
Upon receiving all the local models, PS updates the global model as
\begin{align}\label{e5}
\bm{w}_{m+1}= \frac{1}{K} \sum_{k=1}^K  \bm{w}_{(m+1)T}^k.
\end{align}
Note that the above steps iterate until the FL algorithm converges to a common global model \cite{ajoint}.

\vspace{-0.1cm}
\subsection{Unreliable Transmission Model}
Given that the bandwidth available for FL is always limited, only a subset of $N$ devices can participate in model training per round. We assume that the PS uniformly selects $N$ devices without replacement \cite{importance} and the set of the selected devices at the $m$-th round is denoted by $\mathcal{S}_m$. Then, in the above steps, only the selected devices participate in model aggregation. 

Moreover, over wireless channels, transmission errors occurs during the uplink local FL parameter report. For the downlink transmission, we assume that the broadcast of global model parameters is error-free due to much higher transmit power at the PS \cite{ajoint}.
Specifically, we denote the channel gain between the PS and device $k$ by $h_0 h_k(m)d_k^{-\frac{\alpha}{2}}$, where $h_0$ is the channel gain at the reference distance, $h_k(m)\sim \mathcal{CN}(0,1)$ represents the small-scale fading in the $m$-th round, $d_k$ is the distance between PS and device $k$, and $\alpha$ is the path-loss exponent. Then, the SNR in the $m$-th round follows
\begin{align}
\gamma_k = \frac{P \vert h_0 h_k(m) \vert^2 d_k^{-\alpha}}{B_k N_0},\enspace \forall k,
\end{align}
where $P$ is the transmit power, $B_k$ is the bandwidth allocated to device $k$, and $N_0$ is the noise power density. Without loss of generality, we assume that all the selected devices share the bandwidth equally, i.e., $B_k=\frac{B}{N}$. Then, the local updates are transmitted at the fixed rate $\frac{B}{N}\log(1+\theta)$, where $\theta$ is a chosen constant\footnote{To avoid the impacts of stragglers and accelerate model training, the local updates are transmitted at a fixed rate, like in \cite{wirelessfl}, rather than a dynamic rate based on the SNR levels.}. A reception is assumed error-free if the SNR exceeds $\theta$, and otherwise error occurs \cite{wirelessfl}. Explicitly, the probability of error-free transmission can be evaluated as \cite{zhu}
\begin{align}\label{e7}
p_k= \exp\left(-\frac{B N_0 \theta}{2 N P \vert h_0 \vert^2 d_k^{-\alpha}} \right),\enspace \forall k.
\end{align}
Typically in most existing FL systems \cite{ajoint}, the PS relies on a cyclic redundancy check (CRC) mechanism to check the detected data and directly discards the local updates with errors while not asking for retransmission, referred to as the direct discarding strategy (DDS). Specifically, the aggregation step at the $m$-th round in (\ref{e5}) is rewritten for the wireless FL as
\begin{align}
\tilde{\bm{w}}_{m+1}= \sum_{k\in \mathcal{S}_m} \frac{1}{Np_k} \tilde{\bm{w}}_{(m+1)T}^k,
\end{align}
where $\tilde{\bm{w}}_{m+1}^k$ denotes the received local model and it is a discrete random variable taking the value of $\bm{w}_{m+1}^k$ with probability $p_k$ and the value of $0$ with probability $1-p_k$. In addition, $\tilde{\bm{w}}_{m+1}$ denotes the updated global model based on DDS. It is obvious that this averaging step is unbiased, i.e., $\mathbb{E} \left [ \tilde{\bm{w}}_{m+1}\right ]=\bm{w}_{m+1}$. However, the learning performance of the FL algorithm based on DDS is not guaranteed especially for the non-IID case. This is because the absence of updates from some devices may bias the global model towards the remaining devices, leading to excessively large gap with respect to the optimal direction. As shown in Fig.~1, due to absence of $\bm{w}_{(m+1)T}^1$, the global model gets far from the global minimum.

\begin{figure}[!t]
\centering
\includegraphics[width=3in]{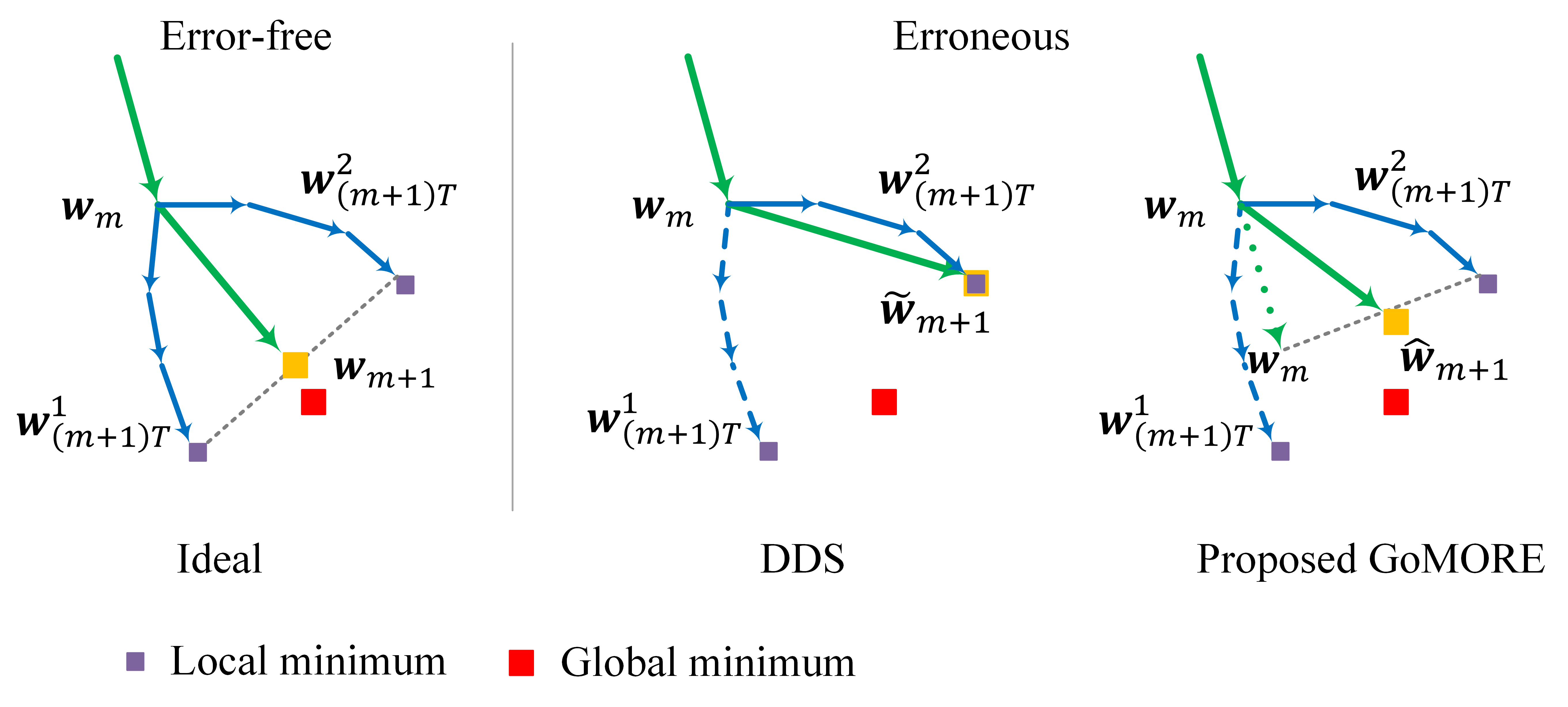}
\caption{ An example reflecting the impact of unreliable communications.}
\vspace{-0.4cm}
\end{figure}

\vspace{-0.1cm}
\subsection{Global Model Reuse (GoMORE) Strategy}

To address the challenge brought by DDS, we propose a novel global model reuse strategy, where \emph{the global model updated at the previous communication round is reused as an alternative of the erroneously received local models}. Concretely, in the proposed GoMORE, if the transmission is error-free, the model parameters are used in the aggregation as usual, and conversely, if the transmitted model parameters have transmission errors, the outdated global model is used to replace the erroneous one. Hence, the aggregation step in $m$-th round becomes
\begin{align}\label{agg}
\hat{\bm{w}}_{m+1}= \frac{1}{N}\sum_{k\in \mathcal{S}_m} \hat{\bm{w}}_{(m+1)T}^k,
\end{align}
where $\hat{\bm{w}}_{(m+1)T}^k$ is defined by
\begin{align}\label{def1}
\hat{\bm{w}}_{(m+1)T}^k\triangleq \left \{  \begin{array}{cc}
\bm{w}_{(m+1)T}^k, & \text{error-free}\enspace \text{w.p.}\enspace p_k,\\ 
\bm{w}_{m},& \text{erroneous}\enspace \text{w.p.}\enspace 1-p_k.
\end{array} \right.
\end{align}
In GoMORE, even though there is packet loss, the alternative global model is able to help maintain the original direction as much as possible without excessive bias, as depicted in Fig.~1.

\section{Performance Analysis and Optimization}
In this section, we analyze the performance of the proposed GoMORE and compare it with DDS. Based on the theoretical results, we further optimize the number of participating devices under limited resource constraints and a stringent delay requirement to improve the performance of GoMORE.
\vspace{-0.2cm}
\subsection{Performance Analysis and Comparison}
To capture the impacts of both unreliable transmission and limited bandwidth, we utilize the expected weight divergence with respect to $\hat{\bm{w}}_{m+1}$ and $\bm{w}_{m+1}$, i.e., $\zeta_1=\mathbb{E}\left[\left \Vert \hat{\bm{w}}_{m+1}- \bm{w}_{m+1}\right \Vert^2 \right]$, as the performance metric, which effectively evaluates the accuracy loss caused by non-IID data \cite{mnist}. More intuitively, weight divergence is understood as the second-order moments of the estimation error for the ideal model parameters. For performance comparison, we analyze the metric of  $\zeta_2=\mathbb{E}\left[\left \Vert \tilde{\bm{w}}_{m+1}- \bm{w}_{m+1}\right \Vert^2\right] $ for DDS. To facilitate the performance analysis, we need the following assumptions, which has been widely used in literature, e.g.,~\cite{zhu2}.

\emph{Assumption 1}: The stochastic gradients on random data samples are uniformly bounded by a finite constant $\gamma^2$, i.e., $\mathbb{E}_{\bm{\xi}}\left[\left \Vert \nabla F_k(\bm{w},\bm{\xi}) \right \Vert^2\right]\leq \gamma^2$.

\emph{Assumption 2}: The expected squared norm of the model parameters are uniformly bounded by a finite constant $G^2$, i.e., $\mathbb{E}\left[\left \Vert  \bm{w} \right \Vert^2\right]\leq G^2$.

Under these general assumptions, we characterize the weight divergence, $\zeta_1$ and $\zeta_2$ by the following bounds in \emph{Lemma 1} and \emph{Lemma 2}, respectively.
\begin{lemma}
An upper bound of $\zeta_1$ is expressed as
\begin{align} \label{l1}
\zeta_1 \leq \sum_{k=1}^K \frac{\eta_m^2T^2 \gamma^2}{K}\left( \frac{K-N}{N(K-1)} p_k^2-p_k+1\right)\triangleq \bar{\zeta}_1.
\end{align}
\end{lemma}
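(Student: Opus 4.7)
The plan is to peel off the three sources of randomness---local stochastic training, uniform device sampling, and per-device transmission failures---by repeated conditioning, so that the squared error breaks into three additive contributions that can then be collapsed into the two terms appearing in~(\ref{l1}). To make the algebra transparent, I would introduce the local drift $\bm{e}_k \triangleq \bm{w}_{(m+1)T}^k - \bm{w}_m$ and the Bernoulli$(p_k)$ transmission indicator $X_k$, so that by~(\ref{def1}) one has $\hat{\bm{w}}_{(m+1)T}^k = \bm{w}_m + X_k \bm{e}_k$ and
\[
\hat{\bm{w}}_{m+1}-\bm{w}_{m+1} = \frac{1}{N}\sum_{k\in\mathcal{S}_m} X_k \bm{e}_k - \frac{1}{K}\sum_{k=1}^K \bm{e}_k.
\]

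I would first apply the tower property, conditioning on $(\mathcal{S}_m, \{\bm{e}_k\})$ and averaging only over the independent Bernoullis. Because the $X_k$'s are mutually independent with variance $p_k(1-p_k)$, this cleanly splits the conditional second moment into a mean-square term $\bigl\|\frac{1}{N}\sum_{k\in\mathcal{S}_m} p_k \bm{e}_k - \frac{1}{K}\sum_k \bm{e}_k\bigr\|^2$ plus a conditional variance $\frac{1}{N^2}\sum_{k\in\mathcal{S}_m} p_k(1-p_k)\|\bm{e}_k\|^2$. Next, taking expectation over $\mathcal{S}_m$, the critical algebraic move is to rewrite the mean-square term as $\|U+V\|^2$, where $U = \frac{1}{N}\sum_{k\in\mathcal{S}_m} p_k \bm{e}_k - \frac{1}{K}\sum_k p_k \bm{e}_k$ is a centered sampling fluctuation and $V = -\frac{1}{K}\sum_k (1-p_k) \bm{e}_k$ is the deterministic transmission bias; because $\mathbb{E}_{\mathcal{S}_m}[U\mid \{\bm{e}_k\}] = \bm{0}$, the cross term $\langle U,V\rangle$ vanishes in expectation.

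With the cross terms gone, I would invoke the standard variance formula for uniform sampling without replacement to get $\mathbb{E}_{\mathcal{S}_m}\|U\|^2 \leq \frac{K-N}{N(K-1)}\cdot\frac{1}{K}\sum_k p_k^2 \|\bm{e}_k\|^2$, which supplies the characteristic factor $\frac{K-N}{N(K-1)}$; the bias $\|V\|^2$ I would bound by Cauchy--Schwarz in the form $\|\sum_k a_k \bm{e}_k\|^2 \leq K \sum_k a_k^2 \|\bm{e}_k\|^2$, while the Bernoulli-variance term simplifies directly once I use $\mathbb{P}(k \in \mathcal{S}_m) = N/K$. Finally, Assumption~1 combined with Jensen's inequality applied to $\bm{e}_k = -\eta_m \sum_{t=0}^{T-1}\nabla F_k(\bm{w}_{mT+t}^k,\bm{\xi}_{mT+t}^k)$ yields $\mathbb{E}\|\bm{e}_k\|^2 \leq \eta_m^2 T^2 \gamma^2$, turning every remaining factor into the same constant. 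The last hurdle is cosmetic: the accumulated coefficient of $\eta_m^2 T^2 \gamma^2/K$ comes out as $\frac{K-N}{N(K-1)}p_k^2 + (1-p_k)^2 + \frac{p_k(1-p_k)}{N}$, so to reach~(\ref{l1}) I would use the identity $(1-p_k)^2 + \frac{p_k(1-p_k)}{N} = (1-p_k)\bigl(1 - p_k + \frac{p_k}{N}\bigr) \leq 1-p_k$ for $N \geq 1$, which collapses the last two pieces into the claimed $-p_k + 1$. The main conceptual obstacle is that, unlike DDS, GoMORE is biased, so a naive expansion produces many non-vanishing cross terms; the layered conditioning above is precisely what makes them all disappear while preserving the tight sampling-without-replacement constant.
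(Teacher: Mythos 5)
Your proposal is correct, and it reaches the stated bound with the same core ingredients as the paper --- a three-way orthogonal split across transmission noise, device-sampling noise, and bias, the without-replacement variance formula supplying the factor $\frac{K-N}{N(K-1)}$, and Jensen plus Assumption~1 to bound $\mathbb{E}\Vert\bm{e}_k\Vert^2 \leq \eta_m^2 T^2\gamma^2$ --- but the route differs in two substantive ways. First, you condition in the opposite order: you integrate out the Bernoulli indicators $X_k$ first (given $\mathcal{S}_m$ and the data), whereas the paper first splits $\zeta_1$ into deviation-from-mean $A_1$ plus bias $A_2$, and only then splits $A_1$ into a channel part $B_1$ and a selection part $B_2$. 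Second, and more consequentially, your treatment of the transmission-variance term is exact: independence of the $X_k$'s gives the conditional variance $\frac{1}{N^2}\sum_{k\in\mathcal{S}_m}p_k(1-p_k)\Vert\bm{e}_k\Vert^2$, which after averaging over $\mathcal{S}_m$ contributes $\frac{p_k(1-p_k)}{NK}$ per device. The paper instead bounds the analogous term $B_1$ by Jensen's inequality, which inflates it by a factor of $N$ to $\frac{p_k(1-p_k)}{K}$; that looser constant is exactly what makes the paper's final combination close without slack, via the identity $p_k(1-p_k)+(1-p_k)^2 = 1-p_k$, while your tighter accumulation $\frac{K-N}{N(K-1)}p_k^2+(1-p_k)^2+\frac{p_k(1-p_k)}{N}$ requires your final relaxation $(1-p_k)\bigl(1-p_k+\frac{p_k}{N}\bigr)\leq 1-p_k$ to collapse into the lemma's form. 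The net effect: your argument actually establishes a strictly sharper intermediate bound than $\bar{\zeta}_1$ for $N>1$ (which could be retained if one wanted a tighter Lemma~1, e.g., for the device-number optimization in Section III-B), at the cost of one extra algebraic step; the paper's ordering buys a slightly shorter final computation that lands on the stated constant directly.
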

\begin{proof}
Please refer to Appendix A. $\hfill \square$
\end{proof}

\begin{lemma}
An upper bound of $\zeta_2$ is expressed as
\begin{align} \label{l2}
\zeta_2 \leq \sum_{k=1}^K \left(\frac{\eta_m^2T^2\gamma^2(K-N)}{KN(K-1)}+ \frac{1-p_k}{Kp_k}G^2\right)\triangleq \bar{\zeta}_2.
\end{align}
\end{lemma}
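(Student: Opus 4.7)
My strategy is to decompose the DDS aggregation error into a subsampling component and a transmission-error component and bound each separately. Introduce the intermediate quantity $\bar{\bm{w}}_{m+1} \triangleq \frac{1}{N}\sum_{k\in\mathcal{S}_m}\bm{w}_{(m+1)T}^k$, i.e., the partial aggregation that would result from errorless uplinks on the selected set. Writing $\tilde{\bm{w}}_{m+1} - \bm{w}_{m+1} = (\tilde{\bm{w}}_{m+1} - \bar{\bm{w}}_{m+1}) + (\bar{\bm{w}}_{m+1} - \bm{w}_{m+1})$, I would first show that the cross term vanishes in expectation: conditional on $\mathcal{S}_m$ and the local iterates, the Bernoulli success indicators $B_k$ satisfy $\mathbb{E}[B_k]=p_k$, so that $\mathbb{E}[\tilde{\bm{w}}_{m+1}-\bar{\bm{w}}_{m+1}\mid\mathcal{S}_m,\{\bm{w}_{(m+1)T}^k\}]=0$; the tower property then yields $\zeta_2 = \mathbb{E}\|\tilde{\bm{w}}_{m+1}-\bar{\bm{w}}_{m+1}\|^2 + \mathbb{E}\|\bar{\bm{w}}_{m+1}-\bm{w}_{m+1}\|^2$.

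For the subsampling piece, the plan is to invoke the classical variance identity for uniform sampling without replacement, which gives $\mathbb{E}\|\bar{\bm{w}}_{m+1}-\bm{w}_{m+1}\|^2 = \tfrac{K-N}{KN(K-1)}\sum_{k=1}^K\mathbb{E}\|\bm{w}_{(m+1)T}^k - \bm{w}_{m+1}\|^2$. Substituting the SGD recursion $\bm{w}_{(m+1)T}^k=\bm{w}_m-\eta_m\sum_{t=0}^{T-1}\nabla F_k(\bm{w}_{mT+t}^k,\bm{\xi}_{mT+t}^k)$ expresses $\bm{w}_{(m+1)T}^k-\bm{w}_{m+1}$ as a difference of accumulated stochastic gradients, and Cauchy--Schwarz together with Assumption 1 bounds each such squared norm by $\eta_m^2T^2\gamma^2$. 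Summing over $k$ recovers the first term of $\bar{\zeta}_2$.

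For the transmission-error piece, I would write $\tilde{\bm{w}}_{m+1}-\bar{\bm{w}}_{m+1}=\tfrac{1}{N}\sum_{k\in\mathcal{S}_m}\tfrac{B_k-p_k}{p_k}\bm{w}_{(m+1)T}^k$ and apply Jensen's inequality in the form $\|\tfrac{1}{N}\sum_{k\in\mathcal{S}_m}x_k\|^2 \le \tfrac{1}{N}\sum_{k\in\mathcal{S}_m}\|x_k\|^2$. Taking conditional expectation over the $B_k$ (independent of $\mathcal{S}_m$ and the local iterates) and using $\mathbb{E}[(B_k-p_k)^2]=p_k(1-p_k)$ reduces this to $\tfrac{1}{N}\sum_{k\in\mathcal{S}_m}\tfrac{1-p_k}{p_k}\|\bm{w}_{(m+1)T}^k\|^2$; averaging over $\mathcal{S}_m$ with $\Pr[k\in\mathcal{S}_m]=N/K$ cancels the $N$ and leaves $\tfrac{1}{K}\sum_k\tfrac{1-p_k}{p_k}\mathbb{E}\|\bm{w}_{(m+1)T}^k\|^2$, after which Assumption 2 delivers the second term $\tfrac{G^2}{K}\sum_k\tfrac{1-p_k}{p_k}$ of $\bar{\zeta}_2$.

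The main difficulty I foresee is organizing the order of conditioning across the three sources of randomness---the selection $\mathcal{S}_m$, the transmission indicators $\{B_k\}$, and the mini-batch noise $\{\bm{\xi}_{mT+t}^k\}$---so that the orthogonal decomposition is rigorous and the Jensen-type bound lands on the coefficients claimed. In particular, the cancellation of $N$ in the transmission-error part hinges on applying Jensen to the $\tfrac{1}{N}$ average (rather than exploiting the cross-device variance cancellation of the zero-mean sum) and on using the correct single-device inclusion probability $N/K$ for sampling without replacement.
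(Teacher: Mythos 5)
Your decomposition is the same one the paper uses in Appendix B: with $\bar{\bm{w}}_{m+1}\triangleq\frac{1}{N}\sum_{k\in\mathcal{S}_m}\bm{w}_{(m+1)T}^k$, the error splits orthogonally into a transmission-noise term and a subsampling term (the paper's $C_1/N^2$ and $C_2/N^2$), and your handling of the noise term --- Jensen over the $N$ selected summands, $\mathbb{E}[(B_k-p_k)^2]/p_k^2=(1-p_k)/p_k$, inclusion probability $N/K$, then Assumption 2 --- reproduces the paper's bound on $C_1$ essentially verbatim. The without-replacement variance identity you invoke for the subsampling term is also exactly what the paper derives by hand in (\ref{a18}) via the pairwise inclusion probability $\frac{N(N-1)}{K(K-1)}$ and cross-term cancellation.

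There is, however, one step that fails as stated: the claim that Cauchy--Schwarz and Assumption 1 bound \emph{each} $\mathbb{E}\Vert\bm{w}_{(m+1)T}^k-\bm{w}_{m+1}\Vert^2$ by $\eta_m^2T^2\gamma^2$. Writing $g_k=\sum_{t=0}^{T-1}\nabla F_k(\bm{w}_{mT+t}^k,\bm{\xi}_{mT+t}^k)$, the quantity you must bound is $\eta_m^2\Vert g_k-\frac{1}{K}\sum_{i}g_i\Vert^2$, while Assumption 1 only controls $\mathbb{E}\Vert g_i\Vert^2\le T^2\gamma^2$; the centered vector can be nearly twice as long as any single $g_k$. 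For instance, if $g_1=v$ and $g_2=\cdots=g_K=-v$ with $\Vert v\Vert=T\gamma$ (each epoch gradient of norm $\gamma$, aligned), then $\Vert g_1-\frac{1}{K}\sum_i g_i\Vert^2=\frac{4(K-1)^2}{K^2}T^2\gamma^2$, which exceeds $T^2\gamma^2$ and approaches $4T^2\gamma^2$ for large $K$. Any direct Cauchy--Schwarz estimate of this difference inherits that factor of up to $4$, so your route as written proves the lemma only with the first term of $\bar{\zeta}_2$ inflated by a constant. The repair is precisely the paper's step (c) in (\ref{a18}): do not bound term by term, but bound the sum over $k$, using that the mean minimizes the sum of squared deviations (equivalently, $\mathbb{E}\left[\Vert\bm{x}-\mathbb{E}\{\bm{x}\}\Vert^2\right]\le\mathbb{E}\left[\Vert\bm{x}\Vert^2\right]$ with $k$ a uniform index), namely
\begin{align*}
\sum_{k=1}^K\mathbb{E}\left\Vert\bm{w}_{(m+1)T}^k-\bm{w}_{m+1}\right\Vert^2\le\sum_{k=1}^K\mathbb{E}\left\Vert\bm{w}_{(m+1)T}^k-\bm{w}_{m}\right\Vert^2\le K\eta_m^2T^2\gamma^2,
\end{align*}
where the last inequality is where Cauchy--Schwarz and Assumption 1 legitimately apply, since $\bm{w}_{(m+1)T}^k-\bm{w}_m=-\eta_m g_k$. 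Re-centering at the common initialization $\bm{w}_m$ rather than at $\bm{w}_{m+1}$ is the one idea your outline is missing; with it, your argument goes through and coincides with the paper's proof.
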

\begin{proof}
Please refer to Appendix B. $\hfill \square$
\end{proof}

Based on \emph{Lemma 1} and \emph{Lemma 2}, we have the following theorem to compare the DDS and the proposed GoMORE.

\begin{theorem}
With sufficiently small learning rate, i.e., $\eta_m\leq \frac{G}{T\gamma}$, the global model obtained via GoMORE is strictly superior to that obtained via DDS in terms of weight divergence.
\end{theorem}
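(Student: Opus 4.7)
The plan is to compare the upper bounds $\bar{\zeta}_1$ and $\bar{\zeta}_2$ supplied by Lemmas 1 and 2 term by term and show that $\bar{\zeta}_2-\bar{\zeta}_1>0$ whenever transmission is genuinely unreliable. Since the paper uses these bounds as the operational proxies for the weight divergences, this is the natural reading of ``strictly superior''.

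First, I would subtract the $k$-th summand of $\bar{\zeta}_1$ from the $k$-th summand of $\bar{\zeta}_2$, exploiting $1-p_k^2=(1-p_k)(1+p_k)$ and $-p_k+1=1-p_k$ to pull out a common factor of $(1-p_k)/K$ from the $\eta_m^2T^2\gamma^2$ contributions and from the $G^2$ term alike. The difference then takes the factored form
\begin{equation*}
\bar{\zeta}_2-\bar{\zeta}_1=\sum_{k=1}^K \frac{1-p_k}{K}\left\{\eta_m^2T^2\gamma^2\!\left[\frac{(K-N)(1+p_k)}{N(K-1)}-1\right]+\frac{G^2}{p_k}\right\}.
\end{equation*}
Because $1-p_k\geq 0$, the task reduces to showing that the braced quantity is non-negative for every $k$ and strictly positive for at least one $k$.

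Second, I would bound the braced quantity using the learning-rate hypothesis $\eta_m\leq G/(T\gamma)$. The bracketed coefficient multiplying $\eta_m^2T^2\gamma^2$ lies in $[-1,1]$, so in the worst case it equals $-1$, in which case one has to verify $G^2/p_k\geq \eta_m^2T^2\gamma^2$. The hypothesis gives $\eta_m^2T^2\gamma^2\leq G^2\leq G^2/p_k$, so this inequality holds, yielding $\bar{\zeta}_1\leq \bar{\zeta}_2$ unconditionally.

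Third, to upgrade to strict inequality I would note that whenever the channel in~(\ref{e7}) is not noise-free, there exists at least one device $k^\star$ with $p_{k^\star}<1$. Then both $1-p_{k^\star}>0$ and $G^2/p_{k^\star}>G^2\geq \eta_m^2T^2\gamma^2$ hold strictly, so the $k^\star$-summand is strictly positive while all other summands are non-negative. This gives $\bar{\zeta}_1<\bar{\zeta}_2$ and establishes the theorem. I do not anticipate a structural obstacle; the only subtle step is the algebraic factoring, which must be performed carefully to keep the $(K-N)/(N(K-1))$ coefficients aligned, but after that the learning-rate hypothesis slots in cleanly and strictness follows from $p_{k^\star}<1$.
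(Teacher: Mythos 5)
Your proof is correct and takes essentially the same route as the paper's: both subtract the Lemma~2 bound from the Lemma~1 bound and use $\eta_m\le G/(T\gamma)$ (i.e., $G^2\ge \eta_m^2T^2\gamma^2$) to absorb the only negative term, the only difference being that you factor the difference through $(1-p_k)$ and bound a coefficient by $-1$, whereas the paper lower-bounds it directly as $\sum_{k}\frac{\eta_m^2T^2\gamma^2}{K}\bigl(\frac{(K-N)(1-p_k^2)}{N(K-1)}+\frac{(1-p_k)^2}{p_k}\bigr)$. If anything, you are more explicit than the paper about where strictness comes from (namely $p_{k^\star}<1$, which by (\ref{e7}) holds for every device), while the paper simply asserts the final expression is positive.
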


\begin{proof}
According to (\ref{l1}), (\ref{l2}) and  $\eta_m\leq \frac{G}{T\gamma}$, we have
\begin{align} \label{e13}
\bar{\zeta}_2-\bar{\zeta}_1\geq \sum_{k=1}^K\frac{\eta_m^2 T^2 \gamma^2}{K}\left( \frac{(K-N)(1-p_k^2)}{N(K-1)} +\frac{(1-p_k)^2}{p_k}\right)>0.
\end{align}
It is obvious that $\bar{\zeta}_1<\bar{\zeta}_2$ and we complete the proof. $\hfill \square$
\end{proof}

\begin{remark}
It is worth noting that (\ref{e13}) decreases monotonically as $p_k$ increases and eventually converges to 0. This indicates that the proposed GoMORE has more significant performance advantages over the DDS at low SNR regions and enjoys the same asymptotical upper bound at high SNRs.
\end{remark}

\subsection{Optimization of Device Activation}
Next, we focus on the optimization of the number of participating devices. Let $d$ denote the data size of the model parameters $\bm{w}$ (in bit). To meet the upload time delay requirement, $\tau$, the communication rate should be set as $d/\tau$. According to (\ref{e7}), $p_k$ is evaluated as
$p_k=\exp\left (-\lambda_k \frac{2^{\rho N}-1}{N}\right)$,
where $\lambda_k\triangleq \frac{BN_0}{2P\vert h_0 \vert^2d_k^{-\alpha}}$ and $\rho\triangleq \frac{d}{B\tau}$. Then, it is obvious that the upper bound $\bar{\zeta}_1$ is not a monotonic function with respect to $N$. Therefore, with the constraint of limited bandwidth and a delay requirement, there should exist an optimal number of participating devices. By removing constant terms in (\ref{l1}), we formulate the optimization problem as
\begin{align}\label{e16}
\mathop{\mathrm{minimize}}_{N} &\quad \sum_{k=1}^K\left( \frac{K-N}{(K-1)N} e^{-2\lambda_k \frac{2^{\rho N}-1}{N}}-e^{-\lambda_k \frac{2^{\rho N}-1}{N}}\right)\nonumber \\
\mathrm{subject}\enspace \mathrm{to} &\quad 1\leq N\leq K.
\end{align}
Given that $N$ is a discrete variable, an exhaustive search is useful to find the optimal $N$ by minimizing the upper bound $\bar{\zeta}_1$ with computational complexity of $\mathcal{O}(N)$. In practice, the number of the devices, $N$, is not large and hence an exhaustive search works. It is worth noting that the effects of the learning parameters and the number of participating devices are decoupled and hence the optimized $N$ applies for various learning parameter settings.
We also note that the optimization of $N$ is based on the statistics of the wireless channels and hence is applicable for long-term settings.

To summarize, we conclude the proposed GoMORE strategy in Algorithm \ref{alg1}.

\begin{algorithm}[!t]\small
\caption{The GoMORE strategy for wireless FL}
\begin{algorithmic}[1]  \label{alg1}
\STATE \textbf{Optimize} $N$ according to (\ref{e16}).
\STATE \textbf{Repeat}
\STATE $\quad$PS broadcasts $\bm{w}_m$ to the randomly selected $N$ devices.
\STATE $\quad$Each selected device conducts local computing and transmits $\bm{w}_{(m+1)T}^k$ to PS.
\STATE $\quad$The PS performs model aggregation according to (\ref{agg}).
\STATE \textbf{End until} convergence.
\end{algorithmic} 

\vspace{-0.1cm}
\end{algorithm}
\section{Simulation Results}
In this section, simulation results are provided to verify the effectiveness of the proposed GoMORE. The popular MNIST data set is exploited to train a multi-layer perceptron (MLP) and the label distribution varies over devices to capture the non-IID characteristic. Unless otherwise specified, the parameters are set as: the number of all devices, $K=20$, the number of local epochs, $T=10$, the batch size $b=50$, the learning rate $\eta_m=0.001$, the total bandwidth, $B=1$ MHz, the noise power spectral density, $N_0=-174$ dBm/Hz, the channel gain at reference distance, $h_0=-30$ dBm, and the path loss exponent, $\alpha=2.2$.

In Fig. 2 and Fig. 3, we compare the proposed GoMORE with DDS under different parameter settings. Consistent with our theoretical analysis, GoMORE outperforms DDS for all the tested setups and approaches the performance of ideal transmission with a more than 5 dB reduction in energy consumption. Moreover, compared with the IID data sets, the learning performance is more sensitive to the transmission errors under the non-IID data. Hence, the proposed GoMORE plays a more critical role for the non-IID cases, which avoids excessive bias brought by unexpected errors.

\begin{figure*}[!t]
	\centering
	\begin{minipage}[t]{0.25\linewidth}
		\centering
		\includegraphics[width=1\linewidth]{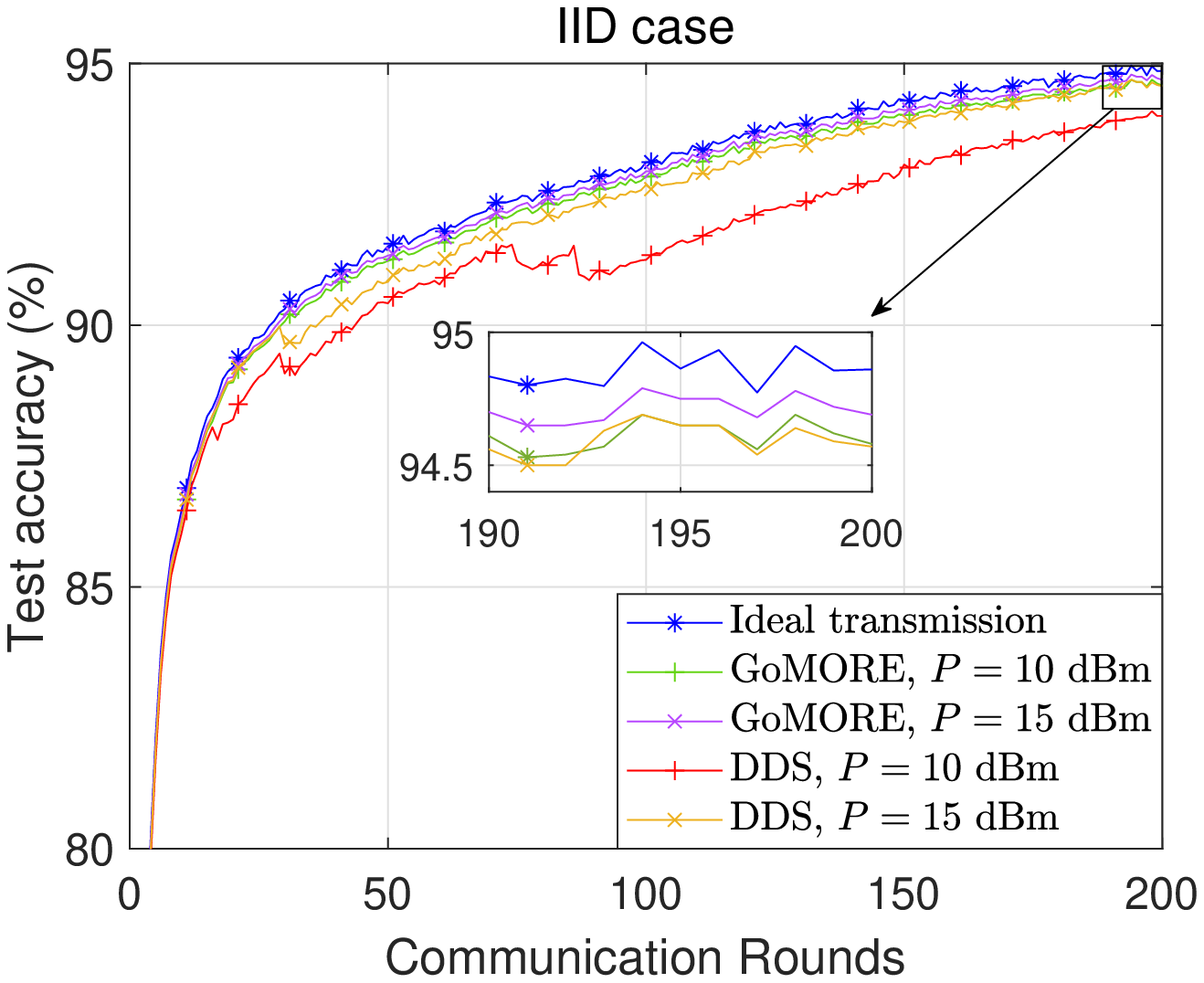}
		\caption{Test accuracy comparison with IID data.}
	\end{minipage}
    \hspace{0.35cm}
	\begin{minipage}[t]{0.25\linewidth}
		\centering
		\includegraphics[width=1\linewidth]{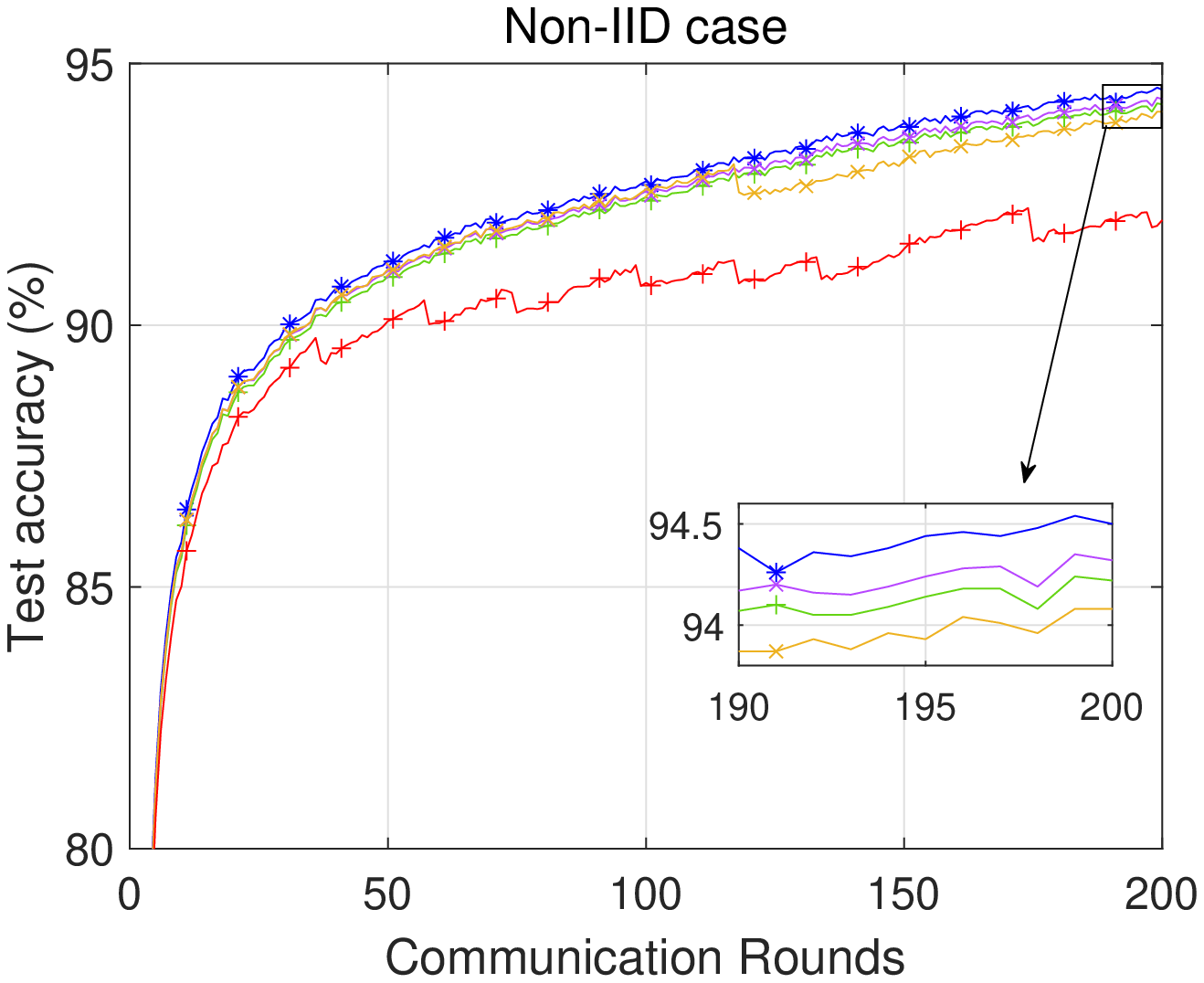}
		\caption{Test accuracy comparison with non-IID data.}
	\end{minipage}
    \hspace{0.35cm}
	\begin{minipage}[t]{0.25\linewidth}
		\centering
		\includegraphics[width=1\linewidth]{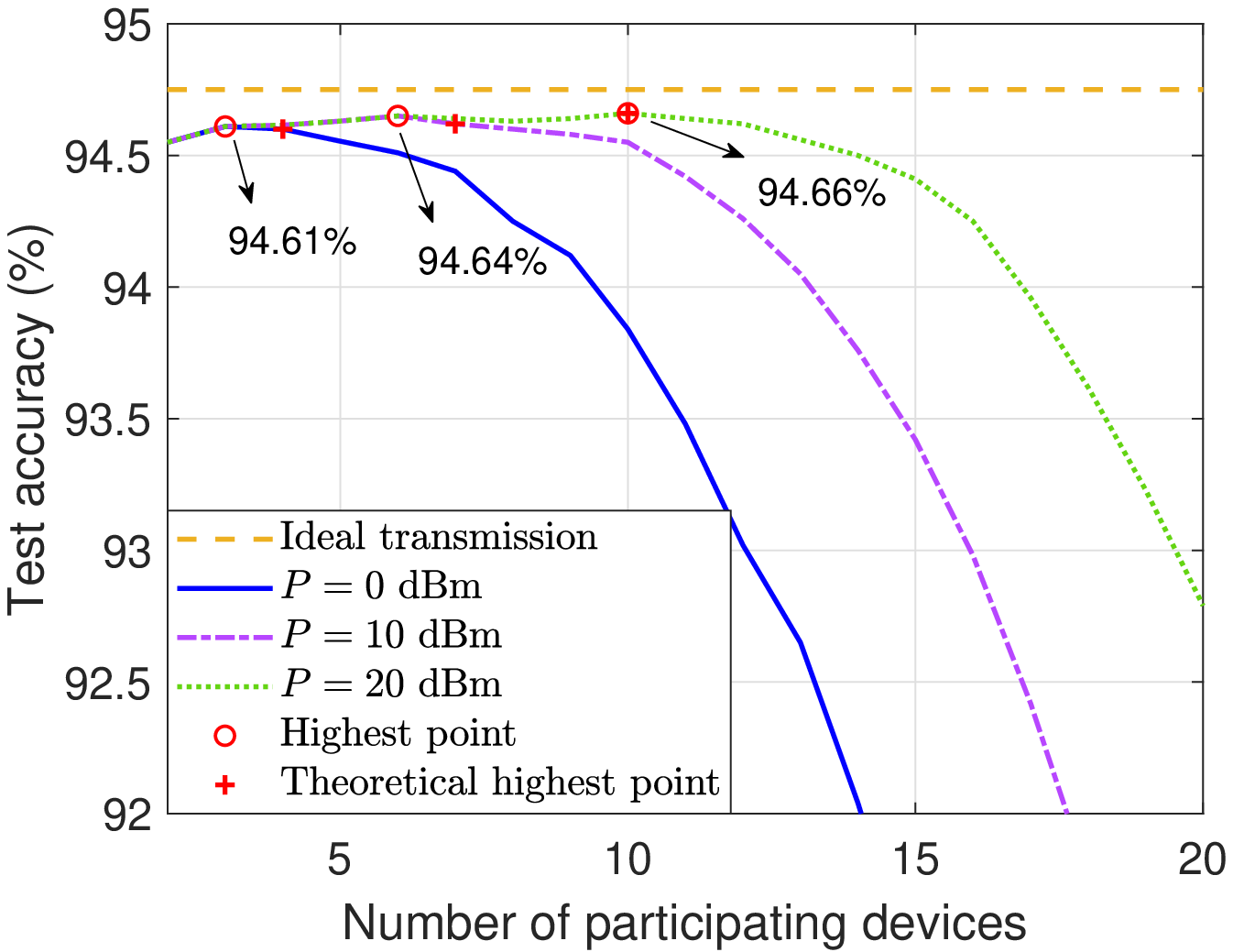}
		\caption{Test accuracy versus number of participating devices.}
	\end{minipage}
 \vspace{-0.5cm}
\end{figure*}

Fig. 4 depicts the test accuracy changes with the number of participating devices. It is observed that, with the increase of $N$, the test accuracy improves first and then decreases. This is because the learning performance is limited first by the number of participating devices and then by unreliable transmission, which unveils the balance between the two factors. In addition, we prefer to activate fewer devices to improve the probability of successful transmission, which is more effective than blindly activating more devices. This is because under limited bandwidth, activating more devices comes with more transmission errors, resulting in even less local models actually exploited for the FL training per round.

\section{Conclusion}
In this paper, we proposed a novel GoMORE strategy for wireless FL with unreliable communications and limited bandwidth. Analytical results revealed that GoMORE always achieves better performance than DDS and there exists an optimal number of participating devices per round, which can be effectively found via exhaustive search. Numerical results confirmed the validity of GoMORE and number optimization.

\appendices
\section{Proof of Lemma 1}
To begin with, we first define the expectations of $\hat{\bm{w}}_{(m+1)T}^k$ and $\hat{\bm{w}}_{m+1}$ over wireless channels and device selection as
\begin{align}\label{a1}
&\bar{\bm{w}}_{(m+1)T}^k \triangleq \mathbb{E}\left [ \hat{\bm{w}}_{(m+1)T}^k \right ]= p_k \bm{w}_{(m+1)T}^k+(1-p_k) \bm{w}_{m}, \nonumber \\
&\bar{\bm{w}}_{m+1} \triangleq \mathbb{E}\left [ \hat{\bm{w}}_{m+1} \right ]= \sum_{k=1}^K \frac{1}{K}\bar{\bm{w}}_{(m+1)T}^k,
\end{align}
respectively. Then, we can express
\begin{align}
\zeta_1&=\mathbb{E}\left [ \Vert \hat{\bm{w}}_{m+1}-\bar{\bm{w}}_{m+1}+\bar{\bm{w}}_{m+1}-\bm{w}_{m+1}  \Vert^2 \right ] \nonumber \\
&\overset{\text{(a)}}{=}\underbrace{\mathbb{E}\left [\Vert \hat{\bm{w}}_{m+1}-\bar{\bm{w}}_{m+1} \Vert^2\right ]}_{A_1} +\underbrace{\mathbb{E}_{\bm{\xi}}\left [\Vert \bar{\bm{w}}_{m+1}-\bm{w}_{m+1}  \Vert^2 \right ]}_{A_2},
\end{align}
where (a) is due to the fact that $\hat{\bm{w}}$ is unbiased, i.e., $\mathbb{E}\left [\hat{\bm{w}}_{m+1}-\bar{\bm{w}}_{m+1} \right ]=0$.
Next, we first rewrite $A_1$ as 
\begin{align}
A_1&
=\mathbb{E} \left [\left \Vert \frac{1}{N}\sum_{k=1}^K \mathbb{I}_k \left ( \hat{\bm{w}}_{(m+1)T}^k -  \bar{\bm{w}}_{m+1}\right) \right \Vert^2 \right ]\nonumber \\
&=\mathbb{E} \left [ \left \Vert \frac{1}{N}\sum_{k=1}^K \mathbb{I}_k \left ( \hat{\bm{w}}_{(m+1)T}^k- \bar{\bm{w}}_{(m+1)T}^k\right) \right. \right. \nonumber \\
&\quad\left.\left. +\frac{1}{N}\sum_{k=1}^K \mathbb{I}_k \left ( \bar{\bm{w}}_{(m+1)T}^k- \bar{\bm{w}}_{m+1}\right)\right\Vert^2\right ]\nonumber \\
&\overset{\text{(a)}}=\frac{1}{N^2}\underbrace{\mathbb{E}\left[\left\Vert \sum_{k=1}^K \mathbb{I}_k \left ( \hat{\bm{w}}_{(m+1)T}^k- \bar{\bm{w}}_{(m+1)T}^k\right) \right\Vert^2\right ]}_{B_1}\nonumber \\
&\quad +\frac{1}{N^2}\underbrace{\mathbb{E}_{\bm{\xi}}\left [\left\Vert \sum_{k=1}^K \mathbb{I}_k \left (\bar{\bm{w}}_{(m+1)T}^k- \bar{\bm{w}}_{m+1}\right) \right\Vert^2\right ]}_{B_2},
\end{align}
where $\mathbb{I}_k=1$ if $k\in \mathcal{S}_m$ and $\mathbb{I}_k=0$ if $k\notin \mathcal{S}_m$. The equality in (a) follows from $\mathbb{E}\left[\hat{\bm{w}}_{(m+1)T}^k \right]=\bar{\bm{w}}_{(m+1)T}^k$. Exploiting the Jensen's Inequality, $B_1$ is bounded by
\begin{align}\label{eq18}
&B_1\leq N \sum_{k=1}^K \mathbb{E}\left [\left \Vert \mathbb{I}_k \left ( \hat{\bm{w}}_{(m+1)T}^k -  \bar{\bm{w}}_{(m+1)T}^k\right) \right \Vert^2 \right]\nonumber \\
&= N \sum_{k=1}^K \Pr(\mathbb{I}_k=1) \mathbb{E}\left [\left \Vert\hat{\bm{w}}_{(m+1)T}^k -  \bar{\bm{w}}_{(m+1)T}^k \right \Vert^2\right ] \nonumber \\
&= \frac{N^2}{K}  \sum_{k=1}^K \mathbb{E}\left [\left \Vert\hat{\bm{w}}_{(m+1)T}^k -  \bar{\bm{w}}_{(m+1)T}^k \right \Vert^2 \right ]\nonumber \\
&\overset{\text{(a)}}{=} \frac{N^2}{K} \sum_{k=1}^K \left(p_k \mathbb{E}_{\bm{\xi}}\left[\left \Vert \bm{w}_{(m+1)T}^k - \bar{\bm{w}}_{(m+1)T}^k \right \Vert^2\right ]\right. \nonumber\\
&\quad\left.+ (1-p_k)\mathbb{E}_{\bm{\xi}}\left[\left \Vert \bm{w}_{m}- \bar{\bm{w}}_{(m+1)T}^k \right \Vert^2\right ]\right)\nonumber \\
& = \frac{N^2}{K} \sum_{k=1}^K  p_k (1-p_k) \mathbb{E}_{\bm{\xi}}\left[ \left \Vert \eta_m \sum_{t=0}^{T-1} \nabla F_k (\bm{w}_{mT+t}^k,\bm{\xi}_{mT+t}^k) \right \Vert^2\right]\nonumber \\
&\overset{\text{(b)}}{\leq}  \frac{N^2}{K} \sum_{k=1}^K p_k (1-p_k) \eta_m^2 T^2 \gamma^2,
\end{align}
where $\Pr(\mathbb{I}_k=1)=\frac{N}{K}$, (a) comes from the definition in (\ref{def1}), and (b) exploits the Jensen's Inequality and \emph{Assumption~1}. Next, we take similar steps as \cite[Appendix B.4]{converge} to bound $B_2$. It follows

\vspace{-0.1cm}
\resizebox{.98\linewidth}{!}{
\begin{minipage}{\linewidth}
\begin{align}\label{a18}
 B_2&\overset{(a)}{\leq} \frac{N(K-N)}{K(K-1)}\sum_{k=1}^K \mathbb{E}_{\bm{\xi}}\left [\left \Vert\bar{\bm{w}}_{(m+1)T}^k- \bar{\bm{w}}_{m+1} \right \Vert^2\right ]\nonumber \\
&\overset{(b)}{=} \frac{N(K-N)}{K(K-1)}\sum_{k=1}^K p_k^2 \mathbb{E}_{\bm{\xi}}\left[\left \Vert \bm{w}_{(m+1)T}^k- \frac{1}{K}\sum_{i=1}^K  \bm{w}_{(m+1)T}^i \right \Vert^2\right]\nonumber \\
&\overset{(c)}{\leq}  \frac{N(K-N)}{K(K-1)}\sum_{k=1}^K p_k^2 \mathbb{E}_{\bm{\xi}}\left [\left \Vert  \bm{w}_{(m+1)T}^k- \bm{w}_{m} \right \Vert^2\right ]\nonumber \\
&\leq  \frac{N(K-N)}{K(K-1)}\sum_{k=1}^K p_k^2\eta_m^2 T^2 \gamma^2,
\end{align}
\end{minipage}}
where (a) comes from the following equalities: $\Pr\{i,j\in \mathcal{S}_m,i\neq j\}=\frac{N(N-1)}{K(K-1)}$, and 

\vspace{-0.2cm}
\noindent
$\sum_i \mathbb{E}_{\bm{\xi}}\left [\left \Vert\bar{\bm{w}}_{(m+1)T}^i- \bar{\bm{w}}_{m+1} \right \Vert^2\right ]+ \sum_{i\neq j}\mathbb{E}_{\bm{\xi}} \left [\left(\bar{\bm{w}}_{(m+1)T}^i- \bar{\bm{w}}_{m+1}  \right)^T\left(\bar{\bm{w}}_{(m+1)T}^j- \bar{\bm{w}}_{m+1} \right)\right ]=0$, (b) exploits the definition in (\ref{a1}), and the inequality in (c) results from $\frac{1}{K} \sum_{k=1}^K  \left( \bm{w}_{(m+1)T}^k- \bm{w}_{m}\right)=\frac{1}{K}\sum_{i=1}^K  \bm{w}_{(m+1)T}^i-\bm{w}_m$ and $\mathbb{E}\left [\left \Vert\bm{x}-\mathbb{E}\{ \bm{x} \}\right \Vert^2\right ] \leq \mathbb{E}\left[\left \Vert\bm{x}\right \Vert^2\right]$.
Then, for $A_2$, we have 
\begin{align}\label{eq20}
A_2 &= \frac{1}{K^2} \mathbb{E}_{\bm{\xi}} \left[\left \Vert \sum_{k=1}^K \left( \bar{\bm{w}}_{(m+1)T}^k -\bm{w}_{(m+1)T}^k\right)\right \Vert^2 \right]\nonumber  \\
&\leq  \sum_{k=1}^K \frac{(1-p_k)^2}{K}\mathbb{E}_{\bm{\xi}} \left[\left \Vert\bm{w}_{(m+1)T}^k- \bm{w}_{m} \right \Vert^2 \right]\nonumber  \\
&\leq \sum_{k=1}^K\frac{(1-p_k)^2}{K} \eta_m^2 T^2 \gamma^2.
\end{align}

Combining the derived bounds in (\ref{eq18})-(\ref{eq20}), we obtain the desired result in (\ref{l1}).

\section{Proof of Lemma 2}
The derivations are analogous to the steps in Appendix A. To be brief, we first have
\begin{align}
\zeta_2 &\leq \frac{1}{N^2}\underbrace{\mathbb{E}\left[ \left \Vert \sum_{k=1}^{K} \mathbb{I}_k \left(\frac{1}{p_k}\tilde{\bm{w}}_{(m+1)T}^k -\bm{w}_{(m+1)T}^k\right) \right \Vert^2\right]}_{C_1} \nonumber \\
&\quad  +\frac{1}{N^2} \underbrace{\mathbb{E}_{\bm{\xi}}\left[\left \Vert \sum_{k=1}^K \mathbb{I}_k \left(\bm{w}_{(m+1)T}^k -\frac{1}{K}\sum_{i=1}^K \bm{w}_{(m+1)T}^i\right)\right \Vert^2\right]}_{C_2}.
\end{align}
Then, by applying the Jensen's Inequality, the definition of $\tilde{\bm{w}}_{(m+1)T}^k$, and \emph{Assumption 2}, $C_1$ is bounded as
\begin{align}
C_1\leq \frac{N^2}{K} \sum_{k=1}^K \frac{1-p_k}{p_k}\mathbb{E}\left[\left\Vert\bm{w}_{(m+1)T}^k \right \Vert^2\right]\leq \frac{N^2}{K} \sum_{k=1}^K  \frac{1-p_k}{p_k}G^2.
\end{align}
Moreover, it is easy to find that $C_2$ has been bounded in (\ref{a18}). Combining all these derived bounds, we complete the proof.

\end{document}